\documentclass{article}
\usepackage{spconf,amsmath,amssymb,amsthm,graphicx}
\usepackage{booktabs}
\usepackage{url}
\usepackage{xcolor}
\usepackage{enumitem}
\usepackage{subcaption}
\setlist{nosep, leftmargin=14pt}

\newcommand{\x}{\mathbf{x}}
\newcommand{\y}{\mathbf{y}}
\newcommand{\A}{\mathbf{A}}
\newcommand{\F}{\mathbf{F}}
\newcommand{\D}{\mathbf{D}}
\newcommand{\G}{\mathbf{G}}
\renewcommand{\d}{\mathbf{d}}%
\newcommand{\Dt}{\tilde{\D}}%
\newcommand{\I}{\mathbf{I}}
\renewcommand{\O}{\mathbf{O}}
\renewcommand{\S}{\mathbf{S}}%
\newcommand{\twin}{\mathcal{T}}
\newcommand{\e}{\mathrm{e}}
\renewcommand{\j}{\mathrm{j}}
\newcommand{\conj}[1]{\overline{#1}}
\newcommand{\HH}{^\mathsf{H}}
\newtheorem{proposition}{Proposition}
\newcommand{\ann}[1]{\tag*{\text{(#1)}}}%
\newcommand{\ntv}{\widetilde{\mathrm{TV}}}%
\newcommand{\R}{\mathbb{R}}
\newcommand{\C}{\mathbb{C}}

\title{The Generalized Twin Image: \\Understanding Regularized Phase Retrieval}
\name{Zhiyuan Hu, Michael Unser, Jonathan Dong}
\address{Biomedical Imaging Group, \'Ecole polytechnique f\'ed\'erale de Lausanne, Lausanne, Switzerland}
\begin{document}
\raggedbottom%
\maketitle
\begin{abstract}
Image priors can dramatically improve phase retrieval reconstruction, but their effectiveness strongly varies between imaging modalities. 
We introduce the generalized twin image, which extends the classical twin-image ambiguity from Fourier phase retrieval to any invertible sensing operator, to understand this effectiveness.
The ambiguity extends to undersampled and oversampled measurements, remaining exact under undersampling and for oversampled coded Fourier measurements. 
We show that it provides a simple criterion to assess regularization effectiveness: the image prior should favor the ground truth over its generalized twin image. 
We propose the total-variation ratio as a simple measure of this preference, and derive closed-form estimates for structured random phase retrieval models.
Experiments with a pretrained denoiser demonstrate that the ratio predicts regularized reconstruction performance across different sensing models.

\end{abstract}
\begin{keywords}
Phase retrieval, twin image, image prior
\end{keywords}
\section{Introduction}
\label{sec:intro}

Phase retrieval is a nonlinear inverse problem that arises in a broad range of applications, including crystallography~\cite{harrison1993phase}, holography~\cite{latychevskaia2019tutorial} and ptychography~\cite{maiden2013ptychography}, whenever one aims to reconstruct the underlying complex-valued signal from real-valued intensity measurements. 
Fourier phase retrieval (FPR)~\cite{hayes1982reconstruction}, in which the measurements are the magnitudes of the Fourier transform of the signal, is of central interest and lays the foundation for many other modalities.
The reconstruction difficulty given a modality is closely related to its ambiguities, i.e., false solutions that generate the same measurements as the ground truth.
Besides the global phase shift ambiguity inherent to phase retrieval, FPR has two extra ambiguities: translation and conjugate reflection, the latter also called the \textit{twin image}~\cite{hayes1982reconstruction, bendory2017fourier}. 
In consequence, FPR suffers from the well-known twin-image stagnation problem~\cite{guizar2012twin}, which algorithms impose constraints such as nonnegativity, realness and bounded support~\cite{fienup1982phase} to mitigate.
Uniqueness up to these ambiguities holds under practical conditions~\cite{hayes1982reconstruction}, but no known algorithm is guaranteed to reach the solution~\cite{bendory2017fourier}.

In contrast, dense random models, e.g., an i.i.d. Gaussian matrix, benefit from robust reconstruction guarantees~\cite{candes2015phase}, despite their quadratic complexity in the signal dimension.
Structured random models for phase retrieval have been proposed to combine the efficiency of Fourier-based models and the robustness of random models~\cite{hu2026structured}.
They are a cascade of structured random layers, each of which consists of a Fourier transform and a random diagonal matrix.
The same reconstruction quality as with dense random models can be achieved once there are two layers~\cite{hu2026structured}.

Image priors have greatly enhanced reconstruction quality in inverse problems~\cite{zhang2021plug}.
In phase retrieval, including regularization for structured random models enables reconstruction with one layer, i.e., one random mask added to FPR, and with very few measurements, well below the weak recovery threshold~\cite{hu2026priors, ducotterd2025breaking}.
A central question remains how the effects of regularization depend on the sensing model, and whether they can be quantified.

In this paper, we extend the twin image of FPR to any invertible sensing operator with optional undersampling and oversampling, as the \textit{generalized twin image}.
It is an exact ambiguity with undersampling, and, for a Fourier transform and an optional random mask, also with oversampling by zero-padding.
Feasibility of unregularized reconstruction given sufficient measurements is predicted by whether the generalized twin stays an exact ambiguity.
For regularized reconstruction, image priors become more effective if the generalized twin varies more than the ground truth, as they favor low-variation signals.
We propose the total-variation (TV) ratio to quantify this property, and derive closed-form estimates for structured random models of different depths.
We numerically validate these estimates and perform regularized reconstruction with a pretrained denoiser at different model depths, showing that the TV ratio predicts reconstruction accuracy.
The generalized twin provides a scheme to analyze the optimization landscape for other phase retrieval modalities.

\section{Background}
\label{sec:background}
\subsection{Structured Random Phase Retrieval}
Phase retrieval aims to recover a signal $\x \in \C^n$ from the intensity measurements
\begin{equation}
    \y = |\A \x|^{2},
\end{equation}
where $\A \in \C^{m \times n}$ is the sensing matrix, $|\cdot|$ is the elementwise modulus, and $\y \in \R^m$.
The sampling ratio is $\alpha = m / n$.
We consider sensing matrices of the form
\begin{equation} \label{eq:sgo}
    \A = \S \G \O,
\end{equation}
where $\O \in \R^{p \times n}$ zero-pads the signal to size $p = \max(m, n)$, $\G \in \C^{p \times p}$ is invertible, and $\S \in \R^{m \times p}$ selects $m$ of the $p$ entries.
Hence $\O = \I$ with undersampling, $\alpha \leq 1$, and $\S = \I$ with oversampling, $\alpha > 1$.
Structured random models~\cite{hu2026structured} construct $\G$ as a cascade of Fourier transforms and random diagonal matrices,
\begin{equation} \label{eq:srpr}
    \G = \Bigl[\prod_{i=1}^{N} \F \D_{i} \Bigr]\, \F^{s},
\end{equation}
where $\F \in \C^{p \times p}$ is the unitary discrete Fourier transform (DFT), $\F^{s}$ with $s \in \{0, 1\}$ an optional DFT, and $\D_i \in \C^{p \times p}$ a diagonal matrix with i.i.d. Rademacher entries, i.e., uniform in $\{-1, 1\}$.
For these models, $\O$ pads the signal at the center and $\S$ keeps the $m$ central Fourier coefficients.
We define the \textit{depth} as $L = N + 0.5s$.
When $L = 0.5$, $\G = \F$ gives FPR, and when $L = 1$, $\G = \F \D_1$ gives a single coded diffraction pattern~\cite{candes2015coded}.
Deeper models give $\G = \F \D_1 \F$, $\F \D_1 \F \D_2$ and $\F \D_1 \F \D_2 \F$ for $L = 1.5$, $2$ and $2.5$.

\subsection{The Twin Image Problem}
The classical twin image in FPR~\cite{guizar2012twin} is the conjugate reflection
\begin{equation} \label{eq:twin}
    \twin(\x)[\mathbf{k}] = \conj{\x[-\mathbf{k}]},
\end{equation}
where $\conj{(\cdot)}$ denotes the complex conjugate and $\mathbf{k}$ is counted from half a pixel before the pixel $(\lfloor h/2 \rfloor, \lfloor w/2 \rfloor)$ of an $h \times w$ image, the origin of the centered DFT.
Along each axis, this reverses the pixel order for even sizes, e.g., from $(1, 2, 3, 4)$ to $(4, 3, 2, 1)$, and additionally shifts it by one pixel for odd sizes, e.g., from $(1, 2, 3)$ to $(2, 1, 3)$.
The twin satisfies
\begin{equation} \label{eq:twin-fourier}
    \F \twin(\x) = \boldsymbol{\Lambda}\, \conj{\F \x},
\end{equation}
where $\boldsymbol{\Lambda}$ is diagonal with the entry $\e^{\j 2 \pi (u_1 / h + u_2 / w)}$ at the frequency $(u_1, u_2)$ counted from the zero frequency, so that $|\F \twin(\x)| = |\F \x|$.
Since $\x$ and $\twin(\x)$ produce the same Fourier magnitudes, iterative algorithms often stagnate in a superposition of the two, in which features of both images remain~\cite{guizar2012twin}.
With integer indices $\mathbf{k}$ counted from the origin of the centered DFT, \eqref{eq:twin} reads $\twin(\x)[\mathbf{k}] = \conj{\x[-\mathbf{k} - \mathbf{1}]}$ with periodic indices, where $\mathbf{1} = (1, 1)$.
Substituting $\mathbf{k} \mapsto -\mathbf{k} - \mathbf{1}$ in the DFT sum gives \eqref{eq:twin-fourier} and its counterpart
\begin{equation} \label{eq:twin-fourier-dual}
    \twin(\F \x) = \F \conj{\boldsymbol{\Lambda} \x},
\end{equation}
where $\boldsymbol{\Lambda}$ is applied in the pixel domain with the same indices.
Since $\twin$ acts pixelwise, it also satisfies the product rule
\begin{equation} \label{eq:twin-product}
    \twin(\D \x) = \twin(\D) \twin(\x),
\end{equation}
where $\twin(\D)$ is the diagonal matrix with the diagonal $\twin(\d)$, and $\d$ is the diagonal of $\D$.

\subsection{Reconstruction Algorithms}
Gradient descent with spectral initialization~\cite{candes2015phase} is the standard algorithm for random sensing models.
Spectral methods take as the initial estimate the leading eigenvector of $\A\HH \operatorname{Diag}(\tau(\y)) \A$, usually computed by power iterations, where $\tau$ is an elementwise preprocessing of the measurements~\cite{maillard2022construction}.
The gradient descent update on the amplitude loss $\ell = \frac{1}{2} \|\sqrt{\y} - |\A\x|\|_2^2$~\cite{wang2018solving} is
\begin{equation} \label{eq:gradient}
    \x_+ = \x - \mu \A\HH\operatorname{Diag}\left( \frac{1}{2} - \frac{\sqrt{\y}}{2 |\A \x|} \right) \A \x,
\end{equation}
where $\mu$ is the step size, $\mathsf{H}$ denotes the conjugate transpose and $\operatorname{Diag}(\cdot)$ the diagonal matrix from the given vector.

A regularized reconstruction further interleaves a denoising step $\x_+ = D_\sigma(\x)$ after each gradient step, as in plug-and-play (PnP)~\cite{venkatakrishnan2013plug}, where $D_\sigma$ is a denoiser and $\sigma$ the denoising strength.

\section{Methods}
\label{sec:method}

\subsection{The Generalized Twin Image}
\label{sec:gtwin}
We define the generalized twin image of $\x$ given the sensing matrix $\A = \S \G \O$ as
\begin{equation} \label{eq:gtwin}
    \x' = \O^{\mathsf{T}} \mathbf{v}, \quad \mathbf{v} = \G^{-1} \F \twin(\F\HH \G \O \x),
\end{equation}
where $\mathbf{v}$ is the twin on the padded grid and $\O^{\mathsf{T}}$ crops it to the signal size.
For FPR, $\G = \F$ gives $\x' = \twin(\x)$, and for $L = 1$, the same ambiguity appears in~\cite{fannjiang2012absolute} for phase retrieval with a single random mask.
We show that $\x'$ generates the same measurements as $\x$ whenever $\mathbf{v}$ stays within the support of the ground truth.
\begin{proposition} \label{prop:ambiguity}
If $\mathbf{v} \in \operatorname{range}(\O)$, then $|\A \x'|^{2} = |\A \x|^{2}$.
\end{proposition}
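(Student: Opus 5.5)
The plan is to show that, under the hypothesis $\mathbf{v} \in \operatorname{range}(\O)$, the padded twin $\mathbf{v}$ coincides with $\O \x'$. The claim then reduces to comparing $|\G \mathbf{v}|$ with $|\G \O \x|$ entrywise before the selection $\S$.

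First, since $\O$ zero-pads and $\O^{\mathsf{T}}$ crops, we have $\O^{\mathsf{T}} \O = \I$, so $\O \O^{\mathsf{T}}$ is the orthogonal projector onto $\operatorname{range}(\O)$. If $\mathbf{v} = \O \mathbf{w}$ for some $\mathbf{w}$, then $\O \x' = \O \O^{\mathsf{T}} \mathbf{v} = \mathbf{v}$. Hence
\begin{equation*}
\A \x' = \S \G \O \x' = \S \G \mathbf{v} = \S \F \twin(\F\HH \G \O \x),
\end{equation*}
where the last equality uses the definition \eqref{eq:gtwin} and the invertibility of $\G$.

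Next, set $\mathbf{z} = \F\HH \G \O \x$, so that $\G \O \x = \F \mathbf{z}$. By \eqref{eq:twin-fourier}, $\F \twin(\mathbf{z}) = \boldsymbol{\Lambda}\, \conj{\F \mathbf{z}} = \boldsymbol{\Lambda}\, \conj{\G \O \x}$. Since $\boldsymbol{\Lambda}$ is diagonal with unimodular entries, this gives $|\G \mathbf{v}| = |\G \O \x|$ entrywise on the padded grid. Because $\S$ only selects entries, the elementwise modulus commutes with it, so $|\A \x'| = |\S \G \mathbf{v}| = |\S \G \O \x| = |\A \x|$. Squaring yields the claim.

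The only point needing care is the first step: we must verify that cropping followed by re-padding returns $\mathbf{v}$ exactly. This relies on $\O$ being an isometric zero-padding, with $\O^{\mathsf{T}}$ its left inverse. In the undersampled case $\O = \I$, so the hypothesis holds trivially and the ambiguity is always exact. We also need \eqref{eq:twin-fourier} to hold for the $p$-sized grid, with $\boldsymbol{\Lambda}$ defined for that size. This is immediate, since the twin identity was stated for arbitrary image size $h \times w$.
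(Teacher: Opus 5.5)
Your proof is correct and follows essentially the same route as the paper: use $\O^{\mathsf{T}}\O = \I$ to get $\O\O^{\mathsf{T}}\mathbf{v} = \mathbf{v}$, then apply \eqref{eq:twin-fourier} and the unimodularity of $\boldsymbol{\Lambda}$ before the selection $\S$. Your substitution $\mathbf{z} = \F\HH\G\O\x$ just makes explicit the unitarity of $\F$, which the paper's chain of equalities uses implicitly.
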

\begin{proof}
Since $\O^{\mathsf{T}} \O = \I$, $\O \O^{\mathsf{T}}$ is the orthogonal projection onto $\operatorname{range}(\O)$, so the condition gives $\O \O^{\mathsf{T}} \mathbf{v} = \mathbf{v}$.
Hence
\begin{align*}
    |\A \x'| &= |\S \G \O \O^{\mathsf{T}} \mathbf{v}| \ann{with \eqref{eq:sgo}, \eqref{eq:gtwin}} \\
    &= |\S \G \mathbf{v}| \ann{using $\O \O^{\mathsf{T}} \mathbf{v} = \mathbf{v}$} \\
    &= |\S \F \twin(\F\HH \G \O \x)| \ann{with \eqref{eq:gtwin}} \\
    &= |\S \boldsymbol{\Lambda} \conj{\G \O \x}| \ann{with \eqref{eq:twin-fourier}} \\
    &= |\S \conj{\G \O \x}| = |\conj{\S \G \O \x}| = |\A \x| . \ann{using $|\boldsymbol{\Lambda}| = \I$, $\conj{\S} = \S$}
\end{align*}
\end{proof}
In particular, the condition holds for every $\x$ with undersampling, where $\O = \I$.
With oversampling, it holds for $L \leq 1$ when the support of $\x$ is symmetric about the reflection center, since $\mathbf{v} = \D_1^{-1} \twin(\D_1) \twin(\O \x)$ only reflects and flips signs.
For $L \geq 1.5$, the inner Fourier transforms spread the twin over the whole padded grid, so oversampling removes the ambiguity.
\subsection{Total-Variation Ratio}
\label{sec:tvratio}
Given an $h \times w$ image $\x$ with $n = h w$, let $\nabla_1 \x, \nabla_2 \x \in \C^{n}$ be its first-order differences along the two axes on the periodic grid, and let $\nabla \x \in \C^{2 n}$ stack the two.
The total variation (TV)~\cite{rudin1992nonlinear} is 
\begin{equation}
    \mathrm{TV}(\x) = \|\nabla \x\|_1.
\end{equation}
We define the normalized TV and its quadratic counterpart as
\begin{equation} \label{eq:tv-g}
    \ntv(\x) = \frac{\|\nabla \x\|_{1}}{2 n} , \qquad G(\x) = \frac{\|\nabla \x\|_{2}}{\sqrt{2 n}}.
\end{equation}
The TV ratio between the generalized twin and the ground truth is
\begin{equation}
    \rho_{\mathrm{TV}} = \frac{\mathrm{TV}(\x')}{\mathrm{TV}(\x)} = \frac{\ntv(\x')}{\ntv(\x)}, \label{eq:rho-tv}
\end{equation}
which quantifies the variation of the generalized twin relative to the ground truth.
A ratio $\rho_{\mathrm{TV}} > 1$ means that the generalized twin varies more than the ground truth, while $\rho_{\mathrm{TV}} = 1$ gives no distinction between the two.
This metric needs only the ground truth and the sensing model, and it does not depend on the image prior.
It therefore serves as a general diagnostic for regularized reconstruction.

\subsection{Closed-Form Estimates of the TV Ratio}
\label{sec:forms}
While $\rho_{\mathrm{TV}}$ can be computed empirically for any image and sensing operator, closed forms reveal which image statistics the ratio depends on, allowing comparison of the models.
We derive closed-form estimates of $\rho_\mathrm{TV}$ for structured random models of different depths with Rademacher masks, assuming that the signal $\x = \e^{\j \boldsymbol{\varphi}}$ has unit magnitude and small phase steps.
We assume undersampling, $\alpha \leq 1$, so that $\O = \I$.
The TV ratio in the oversampled case follows similarly, as the structure of the generalized twin remains.
We denote $\Dt_i = \D_i^{-1} \twin(\D_i) = \D_i \twin(\D_i)$, a diagonal matrix with the entries $\d_i[\mathbf{k}] \, \d_i[-\mathbf{k} - \mathbf{1}] \in \{-1, 1\}$, and $\mathcal{E}$ the set of the $2 n$ pairs $(\mathbf{a}, \mathbf{b})$ of adjacent pixels of the periodic grid.
\begin{proposition} \label{prop:rho-zero}
For $L = 0.5$, the TV ratio is given by
\begin{equation} \label{eq:rho-zero}
    \rho_{\mathrm{TV}} = 1 .
\end{equation}
\end{proposition}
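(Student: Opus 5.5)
For $L = 0.5$ we have $\G = \F$ and $\O = \I$, since we are in the undersampled regime. The plan is to compute $\x'$ explicitly, show that it is the classical twin, and then show that the classical twin has the same total variation as $\x$.

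First we simplify \eqref{eq:gtwin}. Because $\F$ is unitary, $\F\HH \G \O \x = \F\HH \F \x = \x$. This gives $\mathbf{v} = \F^{-1} \F \twin(\x) = \twin(\x)$, and therefore $\x' = \twin(\x)$, as already noted after \eqref{eq:gtwin}. Note that this step uses neither the unit-magnitude assumption nor the small-phase-step assumption, so the result holds for every $\x$.

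Second, we show that $\mathrm{TV}(\twin(\x)) = \mathrm{TV}(\x)$. With periodic integer indices, $\twin(\x)[\mathbf{k}] = \conj{\x[-\mathbf{k}-\mathbf{1}]}$. Take the difference along axis $i$ with unit vector $\mathbf{e}_i$:
\[
\nabla_i \twin(\x)[\mathbf{k}] = \conj{\x[-\mathbf{k}-\mathbf{1}+\mathbf{e}_i] - \x[-\mathbf{k}-\mathbf{1}]} .
\]
Set $\mathbf{k}' = -\mathbf{k}-\mathbf{1}$. The right-hand side becomes $-\conj{\x[\mathbf{k}'] - \x[\mathbf{k}'-\mathbf{e}_i]}$, which is minus the conjugate of the backward difference of $\x$ at $\mathbf{k}'$. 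That backward difference equals $\nabla_i \x[\mathbf{k}'-\mathbf{e}_i]$.

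As $\mathbf{k}$ ranges over the periodic grid, the map $\mathbf{k} \mapsto -\mathbf{k}-\mathbf{1}-\mathbf{e}_i$ is a bijection. Hence $|\nabla_i \twin(\x)|$ is a permutation of $|\nabla_i \x|$. Equivalently, the reflection maps the edge set $\mathcal{E}$ onto itself, and conjugation and negation preserve modulus. Summing over both axes gives $\|\nabla \twin(\x)\|_1 = \|\nabla \x\|_1$, and so $\rho_{\mathrm{TV}} = 1$ by \eqref{eq:rho-tv}.

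The only delicate point is the index bookkeeping. The reflection turns forward differences into backward ones, so we must check that periodic boundary conditions make the reindexing a genuine bijection on edges. Once this is verified, the claim is immediate. As a by-product, the same argument shows $G(\x') = G(\x)$.
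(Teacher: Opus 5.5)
Your proof is correct and follows the paper's route: you reduce $\x'$ to $\twin(\x)$ using unitarity of $\F$ and $\O=\I$, then show that the conjugate reflection preserves total variation. The paper simply asserts that last fact, whereas you verify it by periodically reindexing the edges; one small slip is that with forward differences $\twin(\x)[\mathbf{k}+\mathbf{e}_i] = \conj{\x[-\mathbf{k}-\mathbf{1}-\mathbf{e}_i]}$, so your display should carry $-\mathbf{e}_i$ rather than $+\mathbf{e}_i$, but your next line already uses the correct form and the modulus-permutation argument is unaffected.
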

\begin{proof}
\begin{align*}
    \rho_{\mathrm{TV}} &= \frac{\|\nabla \F\HH \F \twin(\F\HH \F \x)\|_{1}}{\|\nabla \x\|_{1}} \ann{with \eqref{eq:srpr}, \eqref{eq:gtwin}, \eqref{eq:rho-tv}, $\G = \F$} \\
    &= \frac{\|\nabla \twin(\x)\|_{1}}{\|\nabla \x\|_{1}} \ann{$\F$ unitary} \\
    &= 1.
\end{align*}
The last equality holds as the twin image keeps the same structure as the ground truth, hence the same TV.
\end{proof}
\begin{proposition} \label{prop:rho-one}
For $L = 1$, the TV ratio is approximated as
\begin{equation} \label{eq:rho-one}
    \rho_{\mathrm{TV}} \approx \frac{1}{2} + \frac{1}{\ntv(\x)} .
\end{equation}
\end{proposition}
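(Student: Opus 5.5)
The plan is to first reduce the generalized twin for $L=1$ to an explicit pixelwise formula, and then evaluate its TV edge by edge, replacing the random sum by its expectation over the Rademacher mask.

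\textbf{Step 1: closed form of the twin.} With $\G = \F\D_1$ and $\O=\I$, definition \eqref{eq:gtwin} gives
\[
\x' = \D_1^{-1}\F\HH\F\,\twin(\F\HH\F\D_1\x) = \D_1^{-1}\twin(\D_1\x) = \D_1^{-1}\twin(\D_1)\twin(\x) = \Dt_1\,\twin(\x).
\]
This uses only the unitarity of $\F$ and the product rule \eqref{eq:twin-product}. So $\x'$ is the classical twin multiplied pixelwise by the sign pattern $\tilde d[\mathbf{k}] = \d_1[\mathbf{k}]\,\d_1[-\mathbf{k}-\mathbf{1}] \in\{-1,1\}$.

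\textbf{Step 2: distribution of the signs on edges.} The pattern satisfies $\tilde d[\mathbf{k}] = \tilde d[-\mathbf{k}-\mathbf{1}]$. Apart from this pairing, the values $\tilde d[\mathbf{k}]$ for distinct orbit representatives are i.i.d. uniform on $\{-1,1\}$. Now take an edge $(\mathbf{a},\mathbf{b})\in\mathcal{E}$. The endpoints $\mathbf{a}$ and $\mathbf{b}$ lie in the same orbit $\{\mathbf{k},-\mathbf{k}-\mathbf{1}\}$ only for $O(1)$ edges near the reflection center. On every other edge, $\tilde d[\mathbf{a}]$ and $\tilde d[\mathbf{b}]$ are independent, so they are equal with probability $1/2$ and opposite with probability $1/2$. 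I will neglect the $O(1)$ exceptional edges, since they contribute $O(1/n)$ to the normalized TV.

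\textbf{Step 3: edgewise evaluation.} Write $\twin(\x) = \e^{\j\boldsymbol{\psi}}$, where $\boldsymbol{\psi}$ is the reflected and negated phase, so its phase steps are also small. On an edge $(\mathbf{a},\mathbf{b})$:
\begin{itemize}
\item If the two signs are equal, then $|\x'[\mathbf{a}]-\x'[\mathbf{b}]| = |\twin(\x)[\mathbf{a}]-\twin(\x)[\mathbf{b}]|$.
\item If the two signs are opposite, then $|\x'[\mathbf{a}]-\x'[\mathbf{b}]| = |\e^{\j\psi_\mathbf{a}}+\e^{\j\psi_\mathbf{b}}| = 2|\cos((\psi_\mathbf{a}-\psi_\mathbf{b})/2)| \approx 2$, using the small-step assumption.
\end{itemize}
Taking expectations over the mask and dividing by $2n$ gives
\[
\ntv(\x') \approx \tfrac12\,\ntv(\twin(\x)) + \tfrac12\cdot 2 = \tfrac12\,\ntv(\x) + 1 .
\]
Here $\ntv(\twin(\x)) = \ntv(\x)$ is the same reflection invariance of TV used in the proof of Proposition~\ref{prop:rho-zero}. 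For large $n$, concentration of the sum of roughly $2n$ bounded, weakly dependent terms (each sign is shared by at most about eight edges) justifies replacing $\ntv(\x')$ by its mean. Dividing by $\ntv(\x)$ then yields \eqref{eq:rho-one}.

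\textbf{Main obstacle.} The delicate step is the approximation in the equal-sign term. The expectation must split as the average of the twin's steps over a random half of the edges. This requires the sign pattern to be independent of where $\x$ has large steps, which holds because the mask is independent of the image. One also has to check that the variance is $O(1/n)$ despite the orbit pairing, because each $\tilde d$ value enters only a bounded number of edges. The other approximation, $2\cos(\delta/2)\approx 2$, is a controlled error of order $\delta^2$ under the small-phase assumption.
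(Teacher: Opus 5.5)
Your proposal is correct and follows the paper's proof essentially step for step. You use the same reduction $\x' = \Dt_1 \twin(\x)$, average each edge term over the random sign $s_{\mathbf{a}\mathbf{b}}$ while neglecting the rare edges with $\mathbf{b} = -\mathbf{a}-\mathbf{1}$, and apply the half-angle identity with $|\cos|\approx 1$ under small phase steps. The only differences are minor: you keep the twin's phases and invoke the reflection invariance of TV at the end rather than reindexing to $\x$ up front, and you add an explicit concentration argument that the paper leaves implicit.
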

\begin{proof}
\begin{align*}
    \rho_{\mathrm{TV}} &= \frac{\|\nabla \D_1^{-1} \F\HH \F \twin(\F\HH \F \D_1 \x)\|_{1}}{\|\nabla \x\|_{1}} \ann{with \eqref{eq:srpr}, \eqref{eq:gtwin}, \eqref{eq:rho-tv}, $\G = \F \D_1$} \\
    &= \frac{\|\nabla \Dt_1 \twin(\x)\|_{1}}{\|\nabla \x\|_{1}} \ann{with \eqref{eq:twin-product}, $\F$ unitary} \\
    &= \frac{\sum_{(\mathbf{a}, \mathbf{b}) \in \mathcal{E}} |\x[\mathbf{b}] - s_{\mathbf{a}\mathbf{b}} \, \x[\mathbf{a}]|}{\|\nabla \x\|_{1}} \\
    &\approx \frac{\tfrac{1}{2} \sum_{(\mathbf{a}, \mathbf{b}) \in \mathcal{E}} |\x[\mathbf{b}] - \x[\mathbf{a}]| + \tfrac{1}{2} \sum_{(\mathbf{a}, \mathbf{b}) \in \mathcal{E}} |\x[\mathbf{b}] + \x[\mathbf{a}]|}{\|\nabla \x\|_{1}} \\
    &= \frac{1}{2} + \frac{\sum_{(\mathbf{a}, \mathbf{b}) \in \mathcal{E}} |\cos((\boldsymbol{\varphi}[\mathbf{b}] - \boldsymbol{\varphi}[\mathbf{a}]) / 2)|}{\|\nabla \x\|_{1}} \ann{half angle} \\
    &\approx \frac{1}{2} + \frac{2 n}{\|\nabla \x\|_{1}} = \frac{1}{2} + \frac{1}{\ntv(\x)} . \ann{with \eqref{eq:tv-g}}
\end{align*}
In the third line, $s_{\mathbf{a}\mathbf{b}} = \Dt_1[\mathbf{a}] \, \Dt_1[\mathbf{b}]$ is a random sign.
The first approximation replaces each term by its expectation over $s_{\mathbf{a}\mathbf{b}}$, which is uniform in $\{-1, 1\}$ unless the rare cases $\mathbf{b} = -\mathbf{a} - \mathbf{1}$, where $s_{\mathbf{a}\mathbf{b}} = 1$.
The second approximation follows as the signal has unit magnitudes and small phase steps.
\end{proof}
\begin{proposition} \label{prop:rho-half}
For $L = 1.5$, the TV ratio is approximated as
\begin{equation} \label{eq:rho-half}
    \rho_{\mathrm{TV}} \approx \frac{\sqrt{\pi}}{2} \, \frac{G(\x)}{\ntv(\x)} .
\end{equation}
\end{proposition}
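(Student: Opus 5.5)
The plan is to derive a closed form for the generalized twin at $L = 1.5$, show that its gradient is a random-sign mixture of the ground-truth gradient, and then apply a central-limit heuristic, in the same spirit as the proof of Proposition~\ref{prop:rho-one}.

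\textbf{Step 1: closed form for the twin.} With $\G = \F \D_1 \F$ and $\O = \I$, \eqref{eq:gtwin} gives
\[
\x' = \F\HH \D_1^{-1} \F\HH \F \twin(\F\HH \F \D_1 \F \x) = \F\HH \D_1 \twin(\D_1 \F \x).
\]
The product rule \eqref{eq:twin-product} turns this into $\F\HH \Dt_1 \twin(\F \x)$. The dual relation \eqref{eq:twin-fourier-dual} then gives
\[
\x' = \F\HH \Dt_1 \F \mathbf{z}, \qquad \mathbf{z} = \conj{\boldsymbol{\Lambda} \x}.
\]
So $\x'$ is $\mathbf{z}$ filtered by a random $\pm 1$ sign pattern in the Fourier domain.

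\textbf{Step 2: transport the gradient.} Each periodic finite difference is a circular convolution, so $\nabla_i = \F\HH \mathbf{M}_i \F$ with $\mathbf{M}_i$ diagonal. Diagonal matrices commute, hence
\[
\nabla_i \x' = \F\HH \mathbf{M}_i \Dt_1 \F \mathbf{z} = \F\HH \Dt_1 \F \, \nabla_i \mathbf{z}.
\]
Next I relate $\nabla \mathbf{z}$ to $\nabla \x$. Since $\boldsymbol{\Lambda}$ is a linear phase ramp with steps $2\pi/h$ and $2\pi/w$, we have $|\nabla_i \mathbf{z}[\mathbf{k}]| = |\x[\mathbf{k}+\mathbf{e}_i] - \e^{\pm\j 2\pi/h}\x[\mathbf{k}]|$. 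This differs from $|\nabla_i \x[\mathbf{k}]|$ only by $O(1/h)$ for unit-magnitude signals. I would neglect this term, as the paper does for similar small effects, and take $\|\nabla \mathbf{z}\|_2 \approx \|\nabla \x\|_2$.

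\textbf{Step 3: central-limit heuristic.} Each entry of $\F\HH \Dt_1 \F \mathbf{w}$ is a sum of about $n$ terms with random signs. Treating the signs as independent (ignoring the pairing $\mathbf{k} \leftrightarrow -\mathbf{k}-\mathbf{1}$ built into $\Dt_1$), each entry is approximately a centered circular complex Gaussian. Because $\F$ is unitary, its variance averages to $\|\mathbf{w}\|_2^2/n$. A circular complex Gaussian $g$ with $\mathbb{E}|g|^2 = \sigma^2$ has $|g|$ Rayleigh-distributed with $\mathbb{E}|g| = \tfrac{\sqrt{\pi}}{2}\sigma$. Applying this to both axes, and assuming the per-axis energies $\|\nabla_i \x\|_2^2$ are comparable, each of the $2n$ entries of $\nabla \x'$ has mean modulus about $\tfrac{\sqrt{\pi}}{2}\|\nabla \x\|_2/\sqrt{2n}$. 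This gives
\[
\ntv(\x') \approx \tfrac{\sqrt{\pi}}{2}\, G(\x).
\]
Dividing by $\ntv(\x)$ via \eqref{eq:rho-tv} yields \eqref{eq:rho-half}.

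\textbf{Main obstacle.} The hardest step is justifying the Gaussian approximation. The entries of $\Dt_1$ are pairwise identical ($\Dt_1[\mathbf{k}] = \Dt_1[-\mathbf{k}-\mathbf{1}]$), so only about $n/2$ signs are independent. I would handle this by grouping each symmetric frequency pair into one independent sign with a two-term coefficient. The variance computation survives, because cross terms between a pair carry opposite-frequency phases and average out over output pixels. There are still about $n/2$ independent summands, so a Lindeberg-type CLT applies as long as no single frequency of $\nabla\mathbf{z}$ dominates the energy.
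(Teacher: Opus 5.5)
Your proposal is correct and follows the paper's proof essentially step for step: the same closed form $\x' = \F\HH \Dt_1 \F \conj{\boldsymbol{\Lambda} \x}$ via the product rule and the dual twin relation, commuting the circulant differences past $\F\HH \Dt_1 \F$, the Rayleigh mean $\tfrac{\sqrt{\pi}}{2}\sigma$ with $\sigma$ fixed by unitarity (Parseval), equal per-axis difference energies, and dropping the phase ramp $\boldsymbol{\Lambda}$ (you do this before the Gaussian step, the paper after). Your remark on the pairing $\Dt_1[\mathbf{k}] = \Dt_1[-\mathbf{k}-\mathbf{1}]$ is extra care the paper skips; it simply asserts that $\F\HH \Dt_1 \F$ has approximately Gaussian entries of variance $1/n$.
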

\begin{proof}
With $\mathbf{g}_i = \nabla_i \conj{\boldsymbol{\Lambda} \x}$,
\begin{align*}
    \rho_{\mathrm{TV}} &= \frac{\sum_{i = 1}^{2} \|\nabla_i \F\HH \D_1^{-1} \F\HH \F \twin(\F\HH \F \D_1 \F \x)\|_{1}}{\|\nabla \x\|_{1}} \ann{with \eqref{eq:srpr}, \eqref{eq:gtwin}, \eqref{eq:rho-tv}, $\G = \F \D_1 \F$} \\
    &= \frac{\sum_{i = 1}^{2} \|\nabla_i \F\HH \Dt_1 \F \conj{\boldsymbol{\Lambda} \x}\|_{1}}{\|\nabla \x\|_{1}} \ann{with \eqref{eq:twin-fourier-dual}, \eqref{eq:twin-product}, $\F$ unitary} \\
    &= \frac{\sum_{i = 1}^{2} \|\F\HH \Dt_1 \F \mathbf{g}_i\|_{1}}{\|\nabla \x\|_{1}} \ann{circulants commute~\cite{gray2006toeplitz}} \\
    &\approx \frac{\sqrt{\pi}}{2} \, \frac{\sqrt{n} \sum_{i = 1}^{2} \|\F\HH \Dt_1 \F \mathbf{g}_i\|_{2}}{\|\nabla \x\|_{1}} \\
    &= \frac{\sqrt{\pi}}{2} \, \frac{\sqrt{n} \sum_{i = 1}^{2} \|\mathbf{g}_i\|_{2}}{\|\nabla \x\|_{1}} \ann{Parseval's theorem, $|\Dt_1| = \I$} \\
    &\approx \frac{\sqrt{\pi}}{2} \, \frac{\sqrt{2 n} \, \|\nabla \conj{\boldsymbol{\Lambda} \x}\|_{2}}{\|\nabla \x\|_{1}} = \frac{\sqrt{\pi}}{2} \, \frac{G(\boldsymbol{\Lambda} \x)}{\ntv(\x)} \ann{with \eqref{eq:tv-g}} \\
    &\approx \frac{\sqrt{\pi}}{2} \, \frac{G(\x)}{\ntv(\x)} .
\end{align*}
The third line holds because $\nabla_i$ and $\F\HH \Dt_1 \F$ are both circulant on the periodic grid.
For the first approximation, we apply $\mathbb{E}[|v|] = (\sqrt{\pi} / 2) \sqrt{\mathbb{E}[|v|^2]}$~\cite{goodman2007speckle} for a Rayleigh-distributed variable $v$.
This is because that $\F\HH \Dt_1 \F$ is a circulant matrix with approximately Gaussian entries of variance $1 / n$~\cite{hu2026structured}.
Each entry of $\F\HH \Dt_1 \F \mathbf{g}_i$ is then approximately a circular complex Gaussian variable, and its magnitude is Rayleigh-distributed.
The second approximation uses $\|\mathbf{g}_1\|_2 + \|\mathbf{g}_2\|_2 \leq \sqrt{2} \, \|\nabla \conj{\boldsymbol{\Lambda} \x}\|_2$, with equality when the two axes carry the same difference energy.
The third approximation uses the fact that the phase steps of a signal under the $\boldsymbol{\Lambda}$ transform stay close to those of the original signal when the signal dimension is large.
\end{proof}
\begin{proposition} \label{prop:rho-two}
For $L = 2$, the TV ratio is approximated as
\begin{equation} \label{eq:rho-two}
    \rho_{\mathrm{TV}} \approx \frac{\sqrt{\pi}}{2} \, \frac{\sqrt{2}}{\ntv(\x)} .
\end{equation}
\end{proposition}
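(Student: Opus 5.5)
We will follow the pattern of Propositions~\ref{prop:rho-one} and~\ref{prop:rho-half}. With $\G = \F \D_1 \F \D_2$ and $\O = \I$, definition \eqref{eq:gtwin} gives $\x' = \D_2^{-1} \F\HH \D_1^{-1} \F\HH \F \twin(\F\HH \F \D_1 \F \D_2 \x)$. The first step is to simplify the inner part. The product rule \eqref{eq:twin-product} gives $\twin(\D_1 \F \D_2 \x) = \twin(\D_1)\twin(\F \D_2 \x)$, and the dual relation \eqref{eq:twin-fourier-dual} gives $\twin(\F \D_2 \x) = \F \conj{\boldsymbol{\Lambda} \D_2 \x}$. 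Since $\D_2$ is real, this yields
\begin{equation*}
    \x' = \D_2 \, \F\HH \Dt_1 \F \, \conj{\boldsymbol{\Lambda} \D_2 \x}.
\end{equation*}
Here $\D_2^{-1} = \D_2$, and $\D_2$ commutes with the diagonal $\boldsymbol{\Lambda}$ and with conjugation.

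The second step is to argue that $\x'$ behaves like a field of i.i.d.\ circular Gaussian entries. The vector $\mathbf{w} = \conj{\boldsymbol{\Lambda}\D_2\x}$ has unit-magnitude entries with random signs. The operator $\F\HH \Dt_1 \F$ is circulant with approximately Gaussian entries of variance $1/n$, as in Proposition~\ref{prop:rho-half} citing~\cite{hu2026structured}. Each entry of $\mathbf{u} = \F\HH\Dt_1\F\mathbf{w}$ is therefore approximately circular complex Gaussian with $\mathbb{E}|\mathbf{u}[\mathbf{k}]|^2 = \|\mathbf{w}\|_2^2/n = 1$, by Parseval and $|\Dt_1| = \I$.

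The key difference from $L = 1.5$ is the role of $\D_2$. Because the random signs of $\D_2$ enter before the circulant mixing, the smoothness of $\x$ no longer passes through to $\mathbf{u}$. Adjacent entries $\mathbf{u}[\mathbf{a}]$ and $\mathbf{u}[\mathbf{b}]$ should be approximately uncorrelated. Their correlation is $\frac{1}{n}\sum_{\mathbf{k}} |\boldsymbol{c}[\mathbf{k}]|^2$-weighted sums of products $\d_2[\mathbf{k}]\d_2[\mathbf{k}']$ over distinct $\mathbf{k}, \mathbf{k}'$, which average to zero over the Rademacher signs. Multiplying by the outer $\D_2$ only flips signs and preserves this circular Gaussian structure. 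So for each edge $(\mathbf{a},\mathbf{b})\in\mathcal{E}$, the difference $\x'[\mathbf{b}]-\x'[\mathbf{a}]$ is approximately circular complex Gaussian with variance $2$. Its magnitude is then Rayleigh distributed with mean $(\sqrt{\pi}/2)\sqrt{2}$.

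The third step sums over the $2n$ edges, which gives $\|\nabla\x'\|_1 \approx 2n \cdot (\sqrt{\pi}/2)\sqrt{2}$. Dividing by $\|\nabla\x\|_1 = 2n\,\ntv(\x)$ yields \eqref{eq:rho-two}.

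The main obstacle is the decorrelation claim in the second step. It needs care because the outer $\D_2$ and the inner $\D_2$ are the same matrix, so the two are not independent. I would handle this by conditioning on $\D_2$ only through the products $\d_2[\mathbf{a}]\d_2[\mathbf{k}]$, which remain uniform random signs for $\mathbf{k}\neq\mathbf{a}$. The rare coincident indices, including those paired by $\Dt_1$ via $\mathbf{k}\mapsto -\mathbf{k}-\mathbf{1}$, would then be treated as negligible $O(1/n)$ corrections, as in Proposition~\ref{prop:rho-one}.
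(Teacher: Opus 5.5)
Your proposal is correct and follows the paper's proof: you reduce the twin to $\D_2 \F\HH \Dt_1 \F \conj{\boldsymbol{\Lambda} \D_2 \x}$, use the signs of $\D_2$, the unitarity of $\F\HH \Dt_1 \F$ and $|\x| = 1$ to argue it is an approximately white, unit-variance Gaussian field up to $O(1/n)$ coincident-index corrections, and sum the Rayleigh mean over the $2n$ edges (per edge rather than via $\|\nabla \mathbf{u}\|_2^2 \approx 4n$, which is equivalent). The one step the paper justifies more carefully is circularity, which you import from Proposition~\ref{prop:rho-half}: because $\Dt_1$ is real and twin-symmetric, the entries of $\F\HH \Dt_1 \F$ carry a deterministic linear phase and are not themselves circular, so the paper instead checks $\mathbb{E}[\mathbf{u}[\mathbf{k}]^2] \approx 0$ from the full-turn linear phase of $\mathbf{C}[\mathbf{k}, \mathbf{l}]^2$, and this is what licenses the $\sqrt{\pi}/2$ factor.
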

\begin{proof}
With $\mathbf{u} = \D_2 \F\HH \Dt_1 \F \conj{\boldsymbol{\Lambda} \D_2 \x}$,
\begin{align*}
    \rho_{\mathrm{TV}} &= \frac{\|\nabla \D_2^{-1} \F\HH \D_1^{-1} \F\HH \F \twin(\F\HH \F \D_1 \F \D_2 \x)\|_{1}}{\|\nabla \x\|_{1}} \ann{with \eqref{eq:srpr}, \eqref{eq:gtwin}, \eqref{eq:rho-tv}, $\G = \F \D_1 \F \D_2$} \\
    &= \frac{\|\nabla \mathbf{u}\|_{1}}{\|\nabla \x\|_{1}} \ann{with \eqref{eq:twin-fourier-dual}, \eqref{eq:twin-product}, $\F$ unitary} \\
    &\approx \frac{\sqrt{\pi}}{2} \, \frac{\sqrt{2 n} \, \|\nabla \mathbf{u}\|_{2}}{\|\nabla \x\|_{1}} \\
    &\approx \frac{\sqrt{\pi}}{2} \, \frac{\sqrt{2 n} \, \sqrt{4 n}}{\|\nabla \x\|_{1}} = \frac{\sqrt{\pi}}{2} \, \frac{\sqrt{2}}{\ntv(\x)} . \ann{with \eqref{eq:tv-g}}
\end{align*}
The entry $\mathbf{k}$ of $\mathbf{u}$ is $\d_2[\mathbf{k}] \sum_{\mathbf{l}} \mathbf{C}[\mathbf{k}, \mathbf{l}] \, \d_2[\mathbf{l}] \, \conj{(\boldsymbol{\Lambda} \x)[\mathbf{l}]}$, where $\mathbf{C} = \F\HH \Dt_1 \F$ is unitary.
The random signs $\d_2$ make the entries of $\mathbf{u}$ uncorrelated with zero mean, and approximately Gaussian by the central limit theorem.
Their variance is $\sum_{\mathbf{l}} |\mathbf{C}[\mathbf{k}, \mathbf{l}]|^2 = 1$ since $\x$ has unit magnitude, so $\mathbf{u}$ is white.
The entries are also approximately circular, $\mathbb{E}[\mathbf{u}[\mathbf{k}]^2] \approx 0$, as $\mathbf{C}[\mathbf{k}, \mathbf{l}]^2$ carries a linear phase of one full turn across the grid.
Each entry of $\nabla \mathbf{u}$ is thus approximately circular complex Gaussian with variance $2$.
The first approximation applies the Rayleigh property of Proposition~\ref{prop:rho-half} to the $2 n$ entries of $\nabla \mathbf{u}$, and the second uses $\|\nabla \mathbf{u}\|_2^2 \approx 2 n \cdot 2$.
\end{proof}
For $L \geq 2.5$, the generalized twin remains an uncorrelated Gaussian field of unit variance.
The same estimate therefore holds for every deeper model.

\section{Results}
\label{sec:results}
The results are reproducible with the source code publicly available.\footnote{\url{https://github.com/zhiyhucode/generalized-twin}}

\subsection{The TV Ratio Predicts Regularization Performance}
\label{sec:depth}
We compute the generalized twin and perform regularized reconstruction for the $256 \times 256$ cameraman image~\cite{tachella2025deepinverse} for $L$ from 0.5 to 2.5.
The grayscale image is remapped from $[0, 1]$ to $[0, \pi]$ as the phase of the unit-magnitude signal $\x$.
For each depth, we sample 10 sensing models with Rademacher masks and run one reconstruction for each from noiseless measurements at $\alpha = 0.5$.
The reconstruction first runs 5000 power iterations of the spectral method~\cite{maillard2022construction} using $\tau(y) = \max(1 - 1/y, -5)$ on the measurements normalized to unit mean, with early stopping to initialize.
It then runs 6000 iterations of PnP proximal gradient descent, using the gradient step of \eqref{eq:gradient} with $\mu = 2$.
The regularization step applies the pretrained DRUNet~\cite{zhang2021plug} as $D_\sigma$ to the real and imaginary parts at the noise level $\sigma_t = (1 - \epsilon)^{t}$ with $\epsilon = 5.7 \times 10^{-4}$.
We compute the TV ratio for the generalized twins and the peak signal-to-noise ratio (PSNR) of the reconstructed phase with peak $\pi$ after global phase alignment, and a reconstruction counts as successful above $30$ dB.

Fig.~\ref{fig:gallery} illustrates the ground truth as well as the generalized twins and the corresponding reconstructions at the median accuracy.
For $L=0.5$, the TV ratio is $1$ and all reconstructions stagnate at false solutions.
For $L=1$, the TV ratio increases to $10.7$ and all reconstructions converge to the ground truth.
For $L=1.5$, the TV ratio falls back to $1.9$ since the generalized twin becomes an all-pass random convolution, and only $2$ reconstructions out of $10$ succeed.
For $L \geq 2$, the generalized twin resembles noise with a TV ratio of $12.8$, and the reconstructions achieve slightly better accuracy than for $L = 1$.
These results indicate that the TV ratio faithfully reflects the tendency of the regularized reconstruction to avoid the ambiguities and converge to the ground truth.
\begin{figure}[t!]
    \centering
    \includegraphics[width=\linewidth]{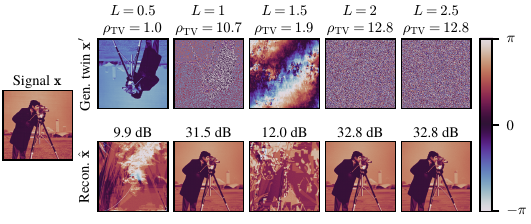}
    \caption{Generalized twin and regularized reconstruction for increasing depth $L$ at sampling ratio $0.5$. The TV ratio predicts the reconstruction accuracy measured by the PSNR in dB.}
    \label{fig:gallery}
\end{figure}

\subsection{Numerical Validation of Estimates of the TV Ratio}
\label{sec:check}
We compute the empirical TV ratios defined by \eqref{eq:rho-tv} for $L$ from $1$ to $2.5$ on the 68 images of the CBSD68 dataset~\cite{martin2001database}, with 10 sensing models per image, and compare them with the estimates of Propositions~\ref{prop:rho-one} to~\ref{prop:rho-two}.
The grayscale images are mapped to phases as in Section~\ref{sec:depth}.
The results are shown in Table~\ref{tab:forms}, where the maximum error across all depths is below 3\,\%, and the median error is below 1\,\%.
The errors come from the finite summation of random samples and from the assumptions that the signal has small phase steps and that the first-order differences per axis share the same energy.
\begin{table}[t]
    \centering
    \caption{Relative error of the closed-form estimates against the measured $\rho_{\mathrm{TV}}$ on CBSD68. All errors are below 3\,\%.}
    \label{tab:forms}
    \begin{tabular}{lcccc}
        \toprule
        $L$ & 1 & 1.5 & 2 & 2.5 \\
        \midrule[\heavyrulewidth]
        median & 0.6\,\% & 0.2\,\% & 0.1\,\% & 0.1\,\% \\
        90th percentile & 1.3\,\% & 0.7\,\% & 0.2\,\% & 0.2\,\% \\
        max & 2.6\,\% & 2.9\,\% & 0.4\,\% & 0.4\,\% \\
        \bottomrule
    \end{tabular}
\end{table}

\subsection{Sampling Ratio Scan}
\label{sec:sweep}
We compare the reconstruction performance with and without regularization for $L$ from $0.5$ to $2$ under different sampling ratios.
We use $\alpha = 0.1$ and $\alpha$ from $0.2$ to $4.0$ in steps of $0.2$.
We sample 50 sensing models for each depth and sampling ratio.
The signal and the regularized reconstruction algorithm are the same as in Section~\ref{sec:depth}, and the unregularized one removes the DRUNet step in each iteration.

Fig.~\ref{fig:sweep} plots the median PSNR of both reconstructions versus the sampling ratio per depth, with a violin for the PSNR distribution.
With $L = 0.5$ and $L = 1$, the unregularized reconstruction fails at every $\alpha$, while the generalized twin stays exact with oversampling.
In contrast, with $L \geq 1.5$, the generalized twin is no longer an exact ambiguity with oversampling, and unregularized reconstruction succeeds from $\alpha = 3.6$ for $L = 1.5$ and from $\alpha = 2.8$ for $L = 2$.
With regularization, the reconstruction performs similarly to the unregularized stagnation on average for $L = 0.5$, except that a few runs succeed from $\alpha = 1.6$.
In contrast, regularization enables stable and accurate reconstruction for $L = 1$ and $L = 2$ starting from $\alpha = 0.5$, with slightly better performance for $L = 2$ than for $L = 1$.
For $L = 1.5$, good reconstructions with regularization can be obtained for about half of the runs already at $\alpha = 0.6$, and the reconstructions approach the ground truth steadily for $\alpha > 3$.
The regularized PSNR saturates near $41$ dB, a ceiling set by the final noise level of the denoiser.
The generalized twin thus indicates whether an unregularized reconstruction can converge to the ground truth given sufficient measurements by whether it stays an exact ambiguity with these measurements.
The corresponding TV ratio further predicts the robustness and accuracy of the regularized reconstruction. 
\begin{figure}[t]
    \centering
    \includegraphics[width=\linewidth]{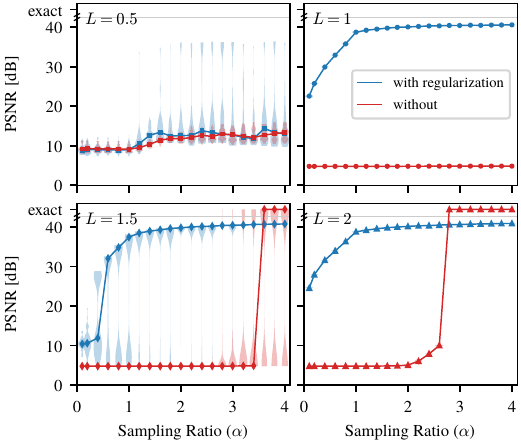}
    \caption{PSNR of the regularized and the unregularized reconstructions versus the sampling ratio, one panel per depth $L$. Whether the generalized twin stays an exact ambiguity indicates the unregularized performance. The corresponding TV ratio indicates the regularized performance.}
    \label{fig:sweep}
\end{figure}

\section{Conclusion}
\label{sec:conclusion}
We show that every invertible phase retrieval sensing operator keeps the twin image ambiguity from FPR as the generalized twin image, exactly with undersampling and, for a Fourier transform and an optional random mask, also with oversampling by zero-padding.
The structure of the generalized twin changes with the model structure.
We propose the TV ratio to quantify the variation of the generalized twin relative to the ground truth and show that the robustness and accuracy of regularized reconstruction scale with it.
In particular, the TV ratio is low for FPR, where regularization barely helps, and increases for $L=1$, where regularization enables stable and accurate reconstruction.
Since the TV ratio needs only the ground truth and the sensing model, and neither a reconstruction nor the denoiser, it predicts the benefit of regularization in advance and can guide the design of the sensing model.
The construction of the generalized twin applies to any ambiguity of FPR, and preliminary results for the circular translation lead to similar conclusions.
More generally, our framework links the symmetries of the forward model to the effectiveness of regularization.
The analysis of the ambiguities transformed from FPR can be extended to other modalities, such as ptychography~\cite{maiden2013ptychography} and Fourier ptychography.
This could help improve the convergence of reconstruction in many phase retrieval applications.

\section{Compliance with ethical standards}
\label{sec:ethics}
This is a theoretical and numerical-simulation study for which no ethical
approval was required.
\section{Acknowledgments}
\label{sec:acknowledgments}
This work was supported by the Swiss National Science Foundation (SNSF) under Grant 200020\_219356 and Grant PZ00P2\_216211 and by the European Research Council (ERC)
through the European Union’s Horizon 2020 (H2020), under Grant Agreement 101020573 FunLearn.

\bibliographystyle{IEEEbib}
\bibliography{refs}

\end{document}